\documentclass{article}

\usepackage[margin=1in]{geometry}
\usepackage{amsmath,amssymb,amsfonts,amsthm}

\newtheorem{theorem}{Theorem} 

\newtheorem{definition}{Definition}

\usepackage{graphicx}        
\usepackage{epstopdf}
\usepackage{tikz,xcolor}

\usepackage{multicol}        

\usepackage[stable]{footmisc}
\usepackage{url}

\newcommand{\bc}{\begin{center}}
\newcommand{\ec}{\end{center}}
\newcommand{\bdefn}{\bigskip \begin{definition}}
\newcommand{\edefn}{\end{definition} \bigskip}
\newcommand{\benum}{\begin{enumerate}}
\newcommand{\eenum}{\end{enumerate}}
\newcommand{\beq}{\begin{equation}}
\newcommand{\eeq}{\end{equation}}
\newcommand{\bfig}{\begin{figure}[htb] \centering}
\newcommand{\efig}{\end{figure}}
\newcommand{\bitem}{\begin{itemize}}
\newcommand{\eitem}{\end{itemize}}
\newcommand{\bray}{\begin{array}}
\newcommand{\eray}{\end{array}}
\newcommand{\btab}{\begin{table}[hbt]}
\newcommand{\etab}{\end{table}}
\newcommand{\bthm}{\bigskip \begin{theorem}}
\newcommand{\ethm}{\end{theorem} \bigskip}

\newcommand{\tr}{\operatorname{tr}}

\newcommand{\rr}{{\cal{R}}_0}

\newcommand{\rv}{{\cal{R}}_v}
\newcommand{\ihat}{\widehat{I}}

\DeclareSymbolFont{AMSb}{U}{msb}{m}{n}
\DeclareMathSymbol{\R}{\mathbin}{AMSb}{"52}

\definecolor{fgreen}{RGB}{0, 128, 0}
\definecolor{borange}{RGB}{255, 80, 0}
\definecolor{dbrown}{RGB}{140, 60, 30}

\newcommand{\red}[1]{{#1}}

\title{The Impact of Neglecting Vaccine Unwillingness\\ in Epidemiology Models}
\author{Glenn Ledder}
\date{}

\begin{document}

\maketitle

\begin{abstract}
\red{With significant population fractions in many societies who refuse vaccines, it is important to reconsider how vaccination is incorporated into compartmental epidemiology models.  It is still most common to apply the vaccination rate to the entire class of susceptibles, rather than to use the more realistic assumption that the vaccination rate function should depend only on the population of susceptibles who are willing and able to receive a vaccination.  This study uses a simple generic disease model to address two questions: (1) How much error is introduced in key model outcomes by neglecting vaccine unwillingness?, and (2) Can the error be reduced by incorporating vaccine unwillingness into the vaccination rate constant rather than the rate diagram?  The answers depend greatly on the time scale of interest.  For the endemic time scale, where longterm behavior is studied with equilibrium point analysis, the error in neglecting unwillingess is large and cannot be improved upon by decreasing the vaccination rate constant.  For the epidemic time scale, where the first big epidemic wave is studied with numerical simulations, the error can still be significant, particularly for diseases that are relatively less infectious and vaccination programs that are relatively slow.}
\end{abstract}

\section{Introduction}

Vaccination is one of many processes that can be incorporated into epidemiology models.  Its effect on the rate diagram varies because the benefits of vaccination depends on the disease and the vaccine.  In the best scenario, vaccination confers guaranteed and permanent immunity.  Lesser benefits are also common:
\benum
\item
Temporary or partial immunity;
\item
Decreased probability of being infected;
\item
Decreased severity of infection;
\item
Faster recovery;
\item
Decreased risk of infecting others.
\eenum
Historically, it has been standard to incorporate vaccination as a linear process with rate proportional to the full susceptible population.  In 2022, Ledder \cite{ledder2022mbe} presented the idea that vaccination should only be applied to a subclass of willing susceptibles, taking into account that some susceptibles are either unwilling or unable to be vaccinated.  \red{Numerous studies, both before and after the onset of the COVID-19 pandemic, have shown that the fraction of individuals who are unwilling or unable is higher than might previously have been thought based on public health policies, as there has always been a significant minority of people who do not follow those policies \cite{dube2013hesitancy}.  A study published by the Annenberg Foundation in 2023 showed that only 71\% of Americans thought vaccines were safe (and this presumably included some who thought vaccines are not effective) and that only 63\% of Americans at that time thought that the COVID vaccine was less dangerous than the COVID disease \cite{annenberg2023vaccine}.  Given recent trends \cite{cdc2025immunization,pbs2025rfkjr,stolberg2025kennedy,hhs2025mrna}, it is reasonable to guess that 30\% is a conservative estimate of the fraction of Americans who will be unwilling to receive a vaccine for the next epidemic.  While other societies may exhibit different attitudes toward vaccination, it is reasonable to expect that some level of vaccine unwillingness is present everywhere.}

Restricting vaccination to a subclass of susceptibles makes a model more complicated.  At minimum, we need an additional state variable, either for willing susceptibles or unwilling susceptibles, and, depending on the biological assumptions about the disease, classes other than the susceptibles may also need to be subdivided.  This raises the question of whether the extra complexity is justified by the improved realism.  This is not an uncommon issue in epidemiology models; for example, it is standard practice to use a single-phase transition model component for processes such as incubation and recovery in spite of the fact that doing so tacitly assumes that times for such processes are exponentially distributed.  

In this study, we examine the extent of differences in model outcomes caused by adopting the naive assumption that all susceptibles are in the vaccination queue (with or without deprecating the rate constant to account for unwillingness) rather than the more realistic assumption that only some susceptibles are in that queue.  For a conservative approach, we allow for the possibility that benefits from vaccination are relatively minor.  Compared to susceptible individuals, we assume three benefits of vaccination: lower rate of infection, lower infectiousness, and faster recovery.

The significance of the decision on how to incorporate vaccination into the model may depend on the time frame for which the model is intended; hence, we separately consider scenarios for the endemic and epidemic phases of the disease.  In the latter case, the amount of time over which the model is tracked could affect the perceived importance of the vaccination modeling, so we examine how the effect of the difference progresses over time.  The key questions are these:
\benum
\item
How much error results from completely neglecting vaccine unwillingness?
\item
To what extent can the error be corrected by incorporating unwillingness into the vaccination rate constant rather than into the rate diagram?
\eenum
We examine how the answer depends on the basic reproduction number of the disease ($\rr$), the size of the unwilling fraction ($W$), the vaccination rate constant scaled with respect to disease duration ($\theta$), the relative susceptibility of vaccinated individuals ($\sigma<1$), the relative infectiousness of previously-vaccinated infectious individuals ($\eta<1$), and the relative recovery rate of previously-vaccinated individuals ($\alpha>1$). 

\section{An SIR Model with Vaccination}

Figure~\ref{fig381} shows an SIR model that includes vaccination and demographics.  The total birth rate is taken to match the death rate in order to keep the population at a constant value, which we can take without loss of generality to be 1.  In addition to classifying population members as susceptible, infectious, or removed, we classify individuals as either willing or unwilling/unable to accept vaccination, with fixed fractions $W$ and $U$.\footnote{See Jiang et al \cite{jiang2025} for a more nuanced model in which attitudes change in response to disease levels.}  \red{The model allows for three possible vaccination benefits, relative to unvaccinated susceptibles:
\benum
\item
The infection rate is multiplied by a factor $\sigma \le 1$;
\item
Infectiousness is multiplied by a factor $\eta \le 1$;
\item
Recovery rate is increased by a factor $\alpha \ge 1$.
\eenum
}
These three parameters combine into a single parameter that represents the relative disease burden of vaccinated individuals:
\beq
\label{xidef}
\xi = \frac{\sigma \eta}{\alpha}.
\eeq
It is reasonable to assume that this imperfect vaccination has a roughly equal quantitative impact for the three modifications.  Hence, when needed, we make the additional simplifying assumption that the effects of vaccination on the infection rate, infectiousness, and recovery time are equal, that is,
\beq
\alpha \sigma = 1, \qquad \eta=\sigma, \qquad \xi = \sigma^3.
\eeq
\red{To keep the models simple, we assume that disease recovery confers lifetime immunity.  Relaxing this assumption would increase the importance of vaccine unwillingness because recovered individuals whose immunity lapses would again be susceptible, when additional vaccination would again require willingness.}

To accommodate the differences between vaccinated and unvaccinated susceptibles, it is necessary to have three susceptible classes, one willing but pre-vaccinated ($S_W$), one vaccinated ($V$), and one unwilling ($S_U$) and to distinguish between infectious individuals who were ($I_V$) and were not ($I$) vaccinated.  For convenience, we define the total population infectiousness as
\beq
\ihat=I + \eta I_V
\eeq
and use this quantity to simplify the formulas.  The figure illustrates the more general endemic model with demographic processes of birth and natural death.  In this case, willingness is incorporated into the birth rates.

\begin{figure}[h]
\centering
\begin{tikzpicture}

	\draw[line width=1.5] (0,4) rectangle (1,5);
	\node at (0.5,4.5) {\LARGE{$S_U$}};
	
    \draw[line width=1.5][->](0.5,5.8)--(0.5,5);
	\node at (0.8,5.35) {\normalsize{$\mu U$}};

	\draw[line width=1.5][->](0.5,4)--(0.5,3.2);
	\node at (0.9,3.6) {\normalsize{$\mu S_U$}};

	\draw[line width=1.5] (0,7) rectangle (1,8);
	\node at (0.5,7.5) {\LARGE{$S_W$}};
	
    \draw[line width=1.5][->](0.5,8.8)--(0.5,8);
	\node at (0.8,8.35) {\normalsize{$\mu W$}};

	\draw[line width=1.5][->](0.5,7)--(0.5,6.2);
	\node at (0.9,6.6) {\normalsize{$\mu S_W$}};

    \draw[line width=1.5] (3,7) rectangle (4,8);
	\node at (3.5,7.5) {\LARGE{$V$}};

	\draw[line width=1.5][->] (1,7.5)--(3,7.5);
	\node at (2,7.75) {\normalsize{$\phi S_W$}};

    \draw[line width=1.5][->](3.5,7)--(3.5,6.2);
	\node at (3.8,6.65) {\normalsize{$\mu V$}};

    \draw[line width=1.5] (5,4) rectangle (6,5);
	\node at (5.5,4.5) {\LARGE{$I$}};
	 
	\draw[line width=1.5][->] (1,4.5)--(5,4.5);
	\node at (3,4.75) {\normalsize{$\beta S_U \ihat$}};
	 
	\draw[line width=1.5][->] (1,7.25)--(5,4.75);
	\node at (4.2,5.75) {\normalsize{$\beta S_W \ihat$}};

    \draw[line width=1.5][->](5.5,4)--(5.5,3.2);
	\node at (5.8,3.65) {\normalsize{$\mu I$}};

    \draw[line width=1.5] (6,7) rectangle (7,8);
	\node at (6.5,7.5) {\LARGE{$I_V$}};
	 
	\draw[line width=1.5][->] (4,7.5)--(6,7.5);
	\node at (5,7.75) {\normalsize{$\sigma \beta V \ihat$}};

    \draw[line width=1.5][->](6.5,7)--(6.5,6.2);
	\node at (6.9,6.65) {\normalsize{$\mu I_V$}};

    \draw[line width=1.5] (9,5.5) rectangle (10,6.5);
	\node at (9.5,6) {\LARGE{$R$}};
	
	\draw[line width=1.5][->](9.5,5.5)--(9.5,4.7);
	\node at (9.9,5.1) {\normalsize{$\mu R$}};

	\draw[line width=1.5][->](6,4.5)--(9,5.75);
	\node at (7.5,5.4) {\normalsize{$\gamma I$}};

	\draw[line width=1.5][->](7,7.5)--(9,6.25);
	\node at (8.2,7.2) {\normalsize{$\alpha \gamma I_V$}};

\end{tikzpicture}
\caption{An SVIR model with vaccination applied only to a willing subclass of susceptibles.
\label{fig381}}
\end{figure}
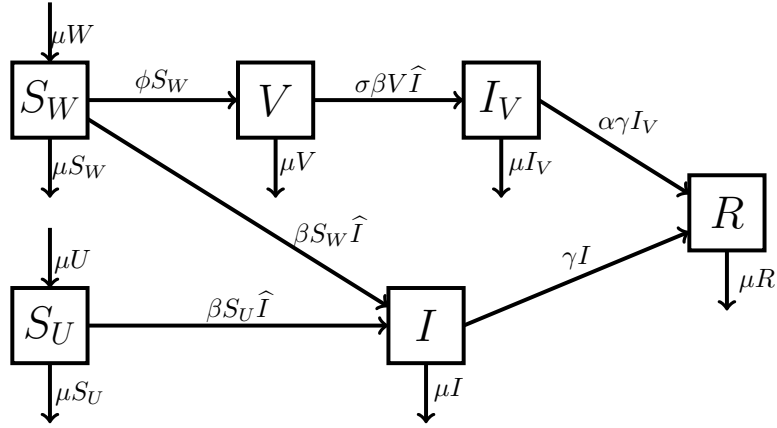

From the diagram, we have the model
\begin{align}
\begin{aligned}
\label{model1}
\dot{I} &= -(\gamma+\mu)I+\beta S \ihat, \\
\dot{I_V} &= -(\alpha \gamma+\mu)I_V+\sigma \beta V \ihat, \\
\dot{S} &= \mu-\mu S-\phi S_W-\beta S \ihat, \\
\dot{V} &= \phi S_W-\mu V-\sigma \beta V \ihat, \\
\dot{S_W} &= \mu W-(\phi+\mu) S_W-\beta S_W \ihat, \\
\end{aligned}
\end{align}
where we have used the full unvaccinated susceptible population $(S=S_W+S_U)$ as a variable in place of the unwilling subpopulation and use dots to indicate time derivatives.

\subsection{Scaling on an Epidemic Time Scale}

Full understanding of epidemiological models is facilitated by a proper scaling.  Because the population in our model is constant, we can assume without loss of generality that the model populations represent population fractions.  We still need to scale the time.  There are two possibilities: the mean lifespan $1/\mu$ and the mean infectious duration $1/(\gamma+\mu)$ \cite{ledderBMB}.  These scales yield slow and fast dimensionless time variables, given respectively as
\beq
\label{dimtime}
T = \mu t. \qquad \tau = (\gamma+\mu)t.
\eeq
For both scalings, we create dimensionless parameters using the fast time as a reference:
\beq
\epsilon = \frac{\mu}{\gamma+\mu} \ll 1, \qquad
\rr = \frac{\beta}{\gamma+\mu}, \qquad
\theta = \frac{\phi}{\gamma+\mu}, \qquad a=1-\alpha^{-1},
\eeq
with $a$ added merely for algebraic convenience.  For the epidemic scaling, we replace the time derivative using $d/dt=(\gamma+\mu) d/d\tau$ and then divide the resulting equations by the factor $\gamma+\mu$ to obtain the scaled model
\begin{align}
\begin{aligned}
\label{epidemic}
\frac{dI}{d\tau} &= -I+\rr S \ihat, & I(0)&=\epsilon I_0, \\
\frac{dI_V}{d\tau} &= -\alpha(1-\epsilon a)I_V+\sigma \rr V \ihat, & I_V(0)&=0, \\
\frac{dS}{d\tau} &= \epsilon(1-S)-\theta S_W-\rr S \ihat, & S(0)&=1-\epsilon I_0, \\
\frac{dV}{d\tau} &= \theta S_W-\epsilon V-\sigma \rr V \ihat, & V(0)&=0, \\
\frac{dS_W}{d\tau} &= \epsilon (W-S_w) -\theta S_W-\rr S_W \ihat, & S_W(0)&=W S(0), \\
\end{aligned}
\end{align}
where we have assumed an initial state in which nobody has been vaccinated or is removed from risk, so that the population consists only of a small number of infectives and the rest susceptibles.  The factor $\epsilon$ in the initial conditions codifies the smallness of the initial infectious population without loss of generality.

Without doing any analysis, the scaled model offers some insights into the relative importance of the different epidemiological and demographic processes.  In the epidemic scaling, all of the demographic terms (birth and natural death) are $O(\epsilon)$ relative to the epidemiological terms.  \red{Demographic processes are too slow to have much impact in the epidemic phase of a disease.}  Changes in the full susceptible class and the willing susceptible subclass are dominated by processes that decrease these populations.  We can consider the epidemic phase to have ended when these populations have been so decreased that the birth process is sufficient to start recovery of the susceptible population.  This suggests that the primary goal of epidemic analysis should be to identify the total population fraction that has escaped the epidemic at the time $\tau_\infty$, arbitrarily taken as the time when the total infection rate of susceptibles ($\rr S \ihat$) is equal to the birth rate ($\epsilon$); then we define the output of the epidemic model in terms of the willing population fraction as the function
\beq
\label{epifnc}
f(W;\rr,\sigma,\eta,\alpha,\theta,\epsilon,I_0)=S(\tau_\infty)+V(\tau_\infty).
\eeq
The function notation with the semicolon is used to identify $f$ as a family of functions of the single variable $W$, distinguished by a vector of parameters.  The function $f$ needs to be determined from numerical simulations.
An additional outcome of the epidemic analysis will be the graphs of $\ihat$ and $V$, showing the total infectiousness and vaccinated population size over time.  

For the modification of the model where we try to account for unwillingness without having two susceptibles classes, the $S$, $V$, and $S_W$ equations in \eqref{epidemic} are replaced by
\begin{align}
\begin{aligned}
\label{epidemic2}
\frac{dS}{d\tau} &= \epsilon(1-S)-W\theta S-\rr S \ihat, & S(0)&=1-\epsilon I_0, \\
\frac{dV}{d\tau} &= W\theta S-\epsilon V-\sigma \rr V \ihat, & V(0)&=0, \\
\end{aligned}
\end{align}
where we have used $W \theta$ as the vaccination rate constant so that the initial vaccination rate is $W \theta$, matching that of the more general model [$\theta S_W(0)=W \theta$], and the $S_W$ equation is replaced by $S_W=S$.

\subsection{Scaling on an Endemic Time Scale}

To scale the model for the endemic scenario, we apply the derivative transformation $d/dt=\mu \,d/dT$ to obtain the system
\begin{align}
\begin{aligned}
\epsilon \frac{dI}{dT} &= -I+\rr S \ihat, \\
\epsilon \frac{dI_V}{dT} &= -\alpha (1-\epsilon a) I_V+\sigma \rr V \ihat, \\
\frac{dS}{dT} &= 1- S-\epsilon^{-1} \theta S_W-\epsilon^{-1} \rr  S \ihat, \\
\frac{dV}{dT} &= \epsilon^{-1} \theta S_W- V-\epsilon^{-1}\sigma \rr V \ihat, \\
\frac{dS_W}{dT} &= W-(\epsilon^{-1}\theta+1) S_W-\epsilon^{-1} \rr S_W \ihat. \\
\end{aligned}
\end{align}

While the model is now dimensionless, it is not yet properly scaled because some of the variables are $O(\epsilon)$ rather than $O(1)$ \cite{ledderBMB}.  This is most easily seen in the $S$ equation, where it appears that $S$ is dominated by the processes that decrease its value, as is true in the epidemic phase.  In actuality, the variables $S_W$ and $\ihat$ are $O(\epsilon)$ on the endemic time scale, so that all three negative terms in the $S$ equation are comparable in importance.  To correctly scale all the state variables and simplify the notation, we make substitutions using the following definitions:
\beq
I = \epsilon Y, \qquad \eta I_V = \epsilon X, \qquad \rr \ihat = \epsilon Z, \qquad \theta S_W=\epsilon P.
\eeq
With these changes, and with $\sigma \eta$ replaced by $\alpha \xi$ using \eqref{xidef}, we obtain the scaled model
\begin{align}
\label{endemic}
\begin{aligned}
\epsilon Y' &= -Y+SZ, \\
\epsilon X' &= -\alpha X+\alpha \xi V Z+\epsilon \alpha a X, \\
S' &= 1-S-P-SZ, \\
V' &= P- V-\sigma V Z, \\
\epsilon P' &= \theta (W-P)-\epsilon P-\epsilon P Z, \\
\end{aligned}
\end{align}
with
\beq
\label{totalI}
Z=\rr(X+Y).
\eeq
There are no initial conditions because the purpose of the endemic scaling is to determine long-term behavior.

As in the epidemic version, there are insights to be gained from mere examination of the scaled model.  The primary processes governing changes in the willing susceptible population, represented by $P$, are birth and vaccination, with natural death and infection occurring at a slower rate and so having little impact.  On the endemic scale, any difference between $P$ and $W$ is resolved quickly.  This biological observation corresponds to the mathematical observation that the $P$ equation is decoupled from the rest (at leading order in the asymptotic regime $\epsilon \to 0$) when it comes to determining stability.

There are two primary outcomes of interest for the endemic model, depending on which of the two possible equilibria, disease-free (DFE) or endemic (EDE), is stable. If the benefit of vaccination when everyone is willing is sufficient to make the DFE stable, then the most important question to ask is how much willingness is required to stabilize the DFE; this minimum level of willingness will be seen to depend on the disease, represented by the basic reproduction number in the absence of vaccination ($\rr$), and the overall relative disease burden of the vaccinated, represented by $\xi$.  For cases where the EDE is stable, the basic reproduction number is not the primary endemic outcome of interest.  Instead we are interested primarily in the equilibrium rate of new infections in the asymptotic limit $\epsilon \to 0$; this is given by a function
\beq
F(W; \rr, \sigma, \eta, \alpha, \theta)=SZ+\sigma VZ = 1-S-V,
\eeq
which we'll see can be determined analytically.  Leading order approximation is quite accurate, as the parameter $\epsilon$ is very small.  For example, if the expected disease duration is about 2 weeks and the expected lifespan of the population is about 80 years, then $\epsilon \approx 0.0005$.  

\subsection{Parameter Estimates}

Given that the model is not for a specific disease, we want to consider a relatively wide range of parameter values.  \red{We consider basic reproduction values in the range $1 \le \rr \le 6$, corresponding to diseases of relatively slight to very significant infectiousness, although not including diseases as infectious as measles.  Most sources estimate the basic reproduction number for influenza at roughly 2--3, while the original strain of the COVID-19 virus had a basic reproduction number closer to 6 \cite{ke2021}, with later strains very likely higher.  The vaccination rate parameter is the ratio of mean disease duration to mean time for willing susceptibles to be vaccinated.  With a typical disease duration of 2 weeks and a range of 4 to 20 weeks for mean vaccination time, we have the range $0.1 \le \theta \le 0.5$.  The relative infection rate for vaccinated individuals ($\sigma$) can vary considerably.  It is 0 for a near-perfect vaccine, such as the measles vaccine, while considerably lower for the seasonal flu vaccine.  CDC data suggests that the worst case for the flu vaccine is a 30\% reduction, while a more typical case is 50\% reduction \cite{cdc2025fluvaccine}.  We refer to the values $\sigma=0.7$ and $\sigma=0.5$, corresponding to these percentages, as ``marginal'' and ``moderate'' vaccine effectiveness.  For simplicity, we assume that the impact of vaccination on the infection duration ($1/\alpha$) and transmission rate of infectives ($\eta$) is the same as the relative rate of infection ($\sigma$).}  As noted earlier, a reasonable estimate for $\epsilon$ is 0.0005.  The value of this parameter plays little role in the outcomes, as its effect is primarily on the amount of time between successive jumps in the infection rate after the first epidemic.

\section{Endemic Analysis}

Both the disease-free and endemic equilibria require the Jacobian matrix for stability determination; the general form of this matrix is   
\beq
\label{J}
J = \left[ \begin{array}{ccccc}
-\Gamma(1-\rr S) & \Gamma \rr S & \Gamma Z & 0 & 0 \\
\Gamma \alpha \xi \rr V  & -\Gamma \alpha (1-\xi \rr V) + \alpha a & 0 & \Gamma \alpha \xi Z & 0 \\
-\rr S & -\rr S & -(1+Z) & 0 & -1 \;\\
-\sigma \rr V & -\sigma \rr V & 0 & -(1+\sigma Z) & 1 \\
-\rr P & -\rr P & 0 & 0 & -\Gamma \theta -(1+Z) \\
\end{array} \right],
\eeq
where for notational convenience, we have defined the additional quantity $\Gamma = \epsilon^{-1}$.


\subsection{The Disease-Free Equilibrium (DFE) and Basic Reproduction Number}

By inspection, we see that the equilibrium equations with all disease states set to 0 yields the DFE
\beq
\label{DFE}
X=Y=0, \qquad V=P=\frac{W}{1-\epsilon \theta^{-1}}, \qquad S=1-P.
\eeq
The Jacobian for this solution is
\beq
\label{J}
J_{DFE} = \left[ \begin{array}{cc|rrc}
-\Gamma(1-\rr S) & \Gamma \rr S & 0 & 0 & 0 \\
\Gamma \alpha \xi \rr V  & -\Gamma \alpha (1-\xi \rr V)+\alpha a & 0 & 0 & 0 \\
\hline
-\rr S & -\rr S & -1 & 0 & -1\; \\
-\sigma \rr V & -\sigma \rr V & 0 & -1 & 1 \\
-\rr P & -\rr P & 0 & 0 & -\Gamma \theta -1\, \\
\end{array} \right],
\eeq
The lower right block of this Jacobian is triangular; therefore, its diagonals contain three of the eigenvalues, all negative.  Stability is determined by the upper left block.

The upper left block also contains the information needed to determine the basic reproduction number using the next generation method \cite{nextgen}, as it constitutes $\hat{F}-\hat{V}$, where
\beq
\hat{F}=\left[ \bray{cc}
\Gamma \rr S & \Gamma \rr S \\
\Gamma \alpha \xi \rr V & \Gamma \alpha \xi \rr V
\eray \right], \qquad
\hat{V}=\left[ \bray{cc}
\Gamma & 0 \\
0 & \alpha (\Gamma-a)
\eray \right].
\eeq
These combine to make the next generation matrix
\beq
\hat{F}\hat{V}^{-1}=\left[ \bray{cc}
\rr S & \alpha^{-1} (1-\epsilon a)^{-1} \rr S \\
\alpha \xi \rr V & (1-\epsilon a)^{-1} \xi \rr V
\eray \right].
\eeq
The largest eigenvalue of this matrix is the basic reproduction number:
\beq
\rv = \rr \left[ S+\frac{\xi V}{1-\epsilon a} \right]
= \rr \left[ 1-\frac{W}{1-\epsilon \theta^{-1}} +\frac{\xi W}{(1-\epsilon \theta^{-1})(1-\epsilon a)} \right],
\eeq
using \eqref{DFE} and the fact that the determinant of the matrix is 0, which makes the eigenvalues 0 and the trace of the matrix.
In the limit as $\epsilon \to 0$, this formula reduces to the asymptotic approximation
\beq
\label{Rv}
\rv \sim \rr [1-(1-\xi)W]= \rr (U+\xi W), \qquad U \equiv 1-W.
\eeq
This result was to be expected, as $\xi$ is the relative disease burden through universal vaccination and $W$ is the fraction of individuals who are willing to be vaccinated.  The small fraction of willing susceptibles who do not make it into class $V$ either because of prior infection or unrelated death accounts for the minor deviation of the exact result from the leading order approximation.  As a practical matter, deviations from leading order can be ignored, as they are far less significant than the smallest errors in estimation of parameter values.

Stability of the DFE can be checked using the Routh-Hurwitz conditions for the submatrix in the upper left corner, which we call $J_{12}$, identifying this as the submatrix constructed from rows 1 and 2 and columns 1 and 2.  The relevant quantities are the negative trace and the determinant, both of which must be positive; to leading order these are
\beq
c_1 = -\tr(J_{12})=\Gamma [1-\rr S+\alpha -\alpha \xi \rr V],
\eeq
\beq
c_2 = \det(J_{12})=\Gamma^2 \alpha [(1-\rr S)-\xi \rr V] = \Gamma^2 \alpha (1-\rr U-\xi \rr W)= \Gamma^2 \alpha (1-\rv).
\eeq
As expected, the condition $\rv < 1$ is necessary for DFE stability.  It remains to show that it is also sufficient.  To that end, we rearrange the inequality $c_2>0$ to obtain
\[ -\xi \rr V > \rr S-1 \]
and substitute this into the formula for $c_1$ to obtain
\begin{align*}
\epsilon c_1 &> 1-\rr S+\alpha +\alpha (\rr S-1) \\
&= 1+(\alpha -1) \rr S>0.
\end{align*}
Retaining the $O(\epsilon)$ terms makes for a minor change in the calculation without changing the final result that $\rv<1$ is the binding constraint for stability.

\subsection{The Endemic Disease Equilibrium (EDE)}

In epidemiology models that lack feedback, as is the case here, we should expect there to be a unique and asymptotically stable EDE that exists only if $\rv>1$.  Here we verify these expectations to leading order.

As a first step to finding the EDE, we use the $Y'$ and $X'$ equations from \eqref{endemic} to obtain equilibrium results
\[ Y=SZ, \qquad X=\xi VZ+\epsilon a X = (1+\epsilon a)\xi VZ + O(\epsilon^2). \]
The need to retain $O(\epsilon)$ terms at this stage will be seen in the stability analysis, where we need the key result obtained by substituting these equilibrium results into \eqref{totalI}:
\beq
\label{SplusV}
\rr S + \xi \rr V \sim 1-\epsilon a \xi \rr V.
\eeq
From here, leading order approximations in the calculation of the EDE will be sufficient.  

Using the leading order approximation $P \sim W$, the $S'$ and $V'$ equilibrium equations become
\beq
\label{SandV}
S(1+Z) = U, \qquad V(1+\sigma Z) = W.
\eeq
Combining the second of these equations with \eqref{SplusV} (to leading order) to eliminate $V$ yields
\[ (1-\rr S)(1+\sigma Z)=\xi \rr V (1+\sigma Z)=\xi \rr W = \rv-\rr U. \]
Rearranging this equation yields
\[ \rr (U-S) + \sigma (1-\rr S) Z = \rv-1. \]
Substituting from the first equation of \eqref{SandV} puts this result into the form
\beq
\label{Zeqn}
[\sigma+(1-\sigma) \rr S] Z = \rv-1.
\eeq
This result serves two purposes: It provides an explicit formula for $Z$ once $S$ is known, and it shows that $\rv>1$ is a necessary condition for the existence of an EDE.

We can now eliminate $Z$ from \eqref{Zeqn} by first multiplying by $S$ and then substituting again from \eqref{SandV}.  This yields a quadratic equation for $S$:
\beq
\label{Seqn}
(U-S)[\sigma+(1-\sigma) \rr S]=(\rv-1)S.
\eeq
Because $\rr S \le 1$, the quantity in the square brackets is also less than or equal to 1; hence,
\[ U-S \ge (\rv-1)S, \]
from which we obtain the bound $S \le \rv^{-1} U$.  Thus, there is a unique solution in the interval $0 \leq S \le \rv^{-1} U$ and so $\rv>1$ is sufficient as well as necessary for existence of a unique EDE.

Proof of the asymptotic stability of the EDE follows the outlines of a simplified form of the Routh-Hurwitz conditions, presented here.
\begin{theorem}
Let $J$ be the Jacobian matrix for a four-component dynamical system.  Let $c_1=-\tr(J)$, $c_2$ be the sum of all $2 \times 2$ subdeterminants of $J$, $c_3$ be the negative of the sum of all $3 \times 3$ subdeterminants of $J$, $c_4=\det{J}$, and combine these quantities as 
\[ q_1=c_1c_2-c_3, \qquad \Delta_1 = c_2-c_1, \qquad \Delta_2 = c_4-c_3, \qquad \rho=c_3/c_1. \]
\benum
\item
The associated equilibrium point is asymptotically stable if and only if
\beq
\label{RH1}
c_1>0, \qquad c_4>0, \qquad q_1>0, \qquad c_3q_1 > c_1^2c_4. 
\eeq
\item
If $c_1>0$, $c_3>0$, and $c_4>0$, then $q_2>0$ is necessary and sufficient for stability, where
\beq
\label{RH2}
q_2 = c_1c_3\Delta_1 -c_3^2-c_1^2 \Delta_2. 
\eeq
\item
The condition $q_2>0$ can be rewritten as
\beq
\label{RH}
\Delta_1>\rho+\frac{\Delta_2}{\rho}.
\eeq
\eenum
\end{theorem}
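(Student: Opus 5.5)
The plan is to start from the standard Routh--Hurwitz criterion for the characteristic polynomial $\lambda^4+c_1\lambda^3+c_2\lambda^2+c_3\lambda+c_4$ of $J$. The coefficients are indeed those defined in the statement: $c_k$ is $(-1)^k$ times the sum of the principal $k\times k$ minors. Stability holds if and only if all Hurwitz determinants are positive. For degree four these are
\[ c_1>0, \qquad c_1c_2-c_3>0, \qquad c_3(c_1c_2-c_3)-c_1^2c_4>0, \qquad c_4>0. \]
This is exactly \eqref{RH1}, with $q_1=c_1c_2-c_3$: the third determinant is $c_3q_1-c_1^2c_4$, and the fourth is $c_4$ times the third. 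So part~1 is a direct citation of the classical criterion, with the product structure of the last Hurwitz determinant noted.

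For part~2, I assume $c_1,c_3,c_4>0$ and show that the two remaining conditions in \eqref{RH1}, namely $q_1>0$ and $c_3q_1>c_1^2c_4$, together are equivalent to $q_2>0$.

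\emph{Rewriting the third condition.} Expand
\[ c_3q_1-c_1^2c_4=c_1c_2c_3-c_3^2-c_1^2c_4. \]
Substitute $c_2=\Delta_1+c_1$ and $c_4=\Delta_2+c_3$. The terms $c_1^2c_3$ cancel, leaving $c_1c_3\Delta_1-c_3^2-c_1^2\Delta_2=q_2$. Hence $c_3q_1-c_1^2c_4=q_2$ identically.

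\emph{The condition $q_1>0$ is redundant.} If $q_2>0$, then $c_3q_1>c_1^2c_4>0$, and since $c_3>0$ this forces $q_1>0$. So under the hypotheses of part~2, \eqref{RH1} reduces to $q_2>0$. Conversely, stability gives $q_2>0$ by part~1. This redundancy step is the only point needing care: one must use the positivity of $c_3$ and $c_4$ to drop $q_1$.

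Part~3 is algebra under the same positivity assumptions. Divide $q_2>0$ by the positive quantity $c_1c_3$ to get
\[ \Delta_1-\frac{c_3}{c_1}-\frac{c_1\Delta_2}{c_3}>0. \]
With $\rho=c_3/c_1>0$ this is \eqref{RH}. The sign of $\Delta_2$ does not matter, since $\rho>0$.

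I do not expect a genuine obstacle. The main thing to get right is the exact Hurwitz-determinant form for degree four, in particular the factorization of the fourth determinant as $c_4$ times the third, and the observation that $c_3>0$ is what makes $q_1>0$ automatic.
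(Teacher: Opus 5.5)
Your proposal is correct and follows essentially the same route as the paper. Both cite the fourth-order Routh--Hurwitz conditions, use $c_3,c_4>0$ to see that $c_3q_1>c_1^2c_4$ forces $q_1>0$, verify $c_3q_1-c_1^2c_4=q_2$ via $c_2=\Delta_1+c_1$ and $c_4=\Delta_2+c_3$, and obtain $\Delta_1>\rho+\Delta_2/\rho$ by dividing by $c_1c_3$ (equivalently, setting $c_3=\rho c_1$); your explicit remark that the fourth Hurwitz determinant is $c_4$ times the third just spells out what the paper leaves to the citation.
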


\begin{proof}
Claim 1 is the standard form of the Routh-Hurwitz stability criteria in four variables \cite{ledderasymptotics}.  The first part of claim 2 is easily seen from the fourth RH criterion, as $q_1 \le 0$ with $c_3q_1>c_1^2c_4$ is impossible when $c_3$ and $c_4$ are both positive.  Defining $q_2=c_3q_1-c_1^2c_4$, we have
\[ q_2 =c_1c_2c_3-c_3^2-c_1^2 (c_3+\Delta_2)=c_1c_3\Delta_1 -c_3^2-c_1^2 \Delta_2, \]
as required.  Replacing $c_3$ with $\rho c_1$ reduces the criterion $q_2>0$ to \eqref{RH}.
\end{proof}

As noted earlier, terms beyond the leading order as $\epsilon \to 0$ can be discounted, provided those leading order terms are not 0.  Thus, apparently lower order terms can be safely ignored only after checking that the apparently larger terms are non-zero.  Using the notation $J_{ij}$ for the submatrix of $J$ that uses rows $i$ and $j$ and columns $i$ and $j$, it appears from the structure of the Jacobian that $|J_{12}|$ is $O(\Gamma^2)$, which dominates the other terms that comprise $c_2$.  This is not correct, however, because of cancelation of dominant terms.  Hence, we must calculate $|J_{12}|$ carefully.  

We begin by introducing some simplifying notation into the Jacobian with
\beq
s=\rr S, \qquad v=\rr V, \qquad p=\rr P,
\eeq
yielding the result
\beq
\label{JEDE1}
J_{EDE} = \left[ \begin{array}{ccccc}
-\Gamma(1-s) & \Gamma s & \Gamma Z & 0 & 0 \\
\Gamma \alpha \xi v  & -\Gamma \alpha (1-\xi v-\epsilon a) & 0 & \Gamma \alpha \xi Z & 0 \\
-s & -s & -(1+Z) & 0 & -1 \;\\
-\sigma v & -\sigma v & 0 & -(1+\sigma Z) & 1 \\
-p & -p & 0 & 0 & -\Gamma \theta -(1+Z) \\
\end{array} \right],
\eeq

Using \eqref{SandV}, we can simplify the $j_{11}$ and $j_{22}$ entries of this matrix to
\[ j_{11}=-\Gamma(1-s)=-\Gamma (1+\epsilon a) \xi v+O(\epsilon), \]
\[ j_{22}=-\Gamma \alpha (1-\xi v-\epsilon a)=-\Gamma \alpha [s-\epsilon a (1-\xi v)] +O(\epsilon)=-\Gamma \alpha (1-\epsilon a) s+O(\epsilon). \]
From these formulas, we have
\[ |J_{12}|=\Gamma^2 \alpha \xi vs (1-\epsilon^2 a^2)-\Gamma^2 \alpha \xi vs=O(1). \]
Thus, the term $|J_{12}|$, which appears at first to dominate the other $2 \times 2$ subdeterminants, actually makes no contribution at the leading order level of $O(\Gamma)$.  With this detail in hand, we can now safely remove all terms that are subdominant in their own row, yielding the simplified matrix
\beq
\label{JEDE}
J_{EDE} = \left[ \begin{array}{cccc|c}
-\Gamma \xi v \;\; & \Gamma s & \Gamma Z & 0 & 0 \\
\Gamma \alpha \xi v  & -\Gamma \alpha s\;\; & 0 & \Gamma \alpha \xi Z & 0 \\
-s\;\; & -s\;\; & -(1+Z)\;\; & 0 & -1\;\; \\
-\sigma v\;\; & -\sigma v\;\; & 0 & -(1+\sigma Z)\;\; & 1 \\
\hline
0 & 0 & 0 & 0 & -\Gamma \theta\;\; \\
\end{array} \right],
\eeq
We see from the block structure that the matrix decomposes into a $4 \times 4$ and a $1 \times 1$ block, so we need only consider the larger block, to which Theorem 1 applies.  The quantity $c_1$ is the negative of the trace, which to leading order is
\[ c_1=\Gamma (\alpha s+\xi v). \]
Among the $3 \times 3$ subdeterminants, only $|J_{123}|$ and $|J_{124}|$ are of $O(\Gamma^2)$.  Calculating these subdeterminants yields the results
\[ J_{123} = -\Gamma^2 \alpha s Z (s+\xi v) \sim -\Gamma^2 \alpha s Z, \quad J_{124} = -\Gamma^2 \alpha \sigma \xi v Z (s+\xi v) \sim -\Gamma^2 \alpha \sigma \xi v Z. \]
Thus,
\beq
c_3=-|J_{123}|-|J_{124}|=\Gamma^2 Z (\alpha s +\alpha \sigma \xi v).
\eeq
Similarly, the determinant calculation for $J_{1234}$ yields the result
\beq
c_4=\Gamma^2 Z(\alpha s+\alpha \sigma \xi v+\alpha \sigma Z)=c_3+\Gamma^2 \alpha \sigma Z^2.
\eeq
Using the relation $\alpha \sigma = 1$, these last two formulas simplify to
\beq 
c_3 = \Gamma^2 (\alpha s +\xi v) Z = \Gamma Z c_1, \qquad c_4=c_3+\Gamma^2 Z^2;
\eeq
hence,
\beq
\rho = \Gamma Z, \qquad \Delta_2=\Gamma^2 Z^2=\rho^2.
\eeq
The  $c_4=c_3+\Gamma^2 Z^2$, so $\Delta_2=\Gamma^2 Z^2$.  Using the results for $\rho$ and $\Delta_2$, the stability criterion \eqref{RH} reduces to 
\beq
c_2>c_1+2\Gamma Z.
\eeq
Now examining the $2 \times 2$ subdeterminants, and using $\alpha \sigma=1$, yields the results
\[ |J_{13}|+|J_{24}|=c_1+2\Gamma Z, \qquad |J_{23}|+|J_{14}|>0, \qquad |J_{12}|=o(\Gamma), \qquad |J_{34}|=o(\Gamma), \]
which proves that the stability criterion is satisfied.  As is typical for straightforward epidemiology models, the EDE is stable whenever it exists, which is when $\rv>1$.

To summarize, we have established the needed analytical results for the EDE:
\bthm
There exists a unique and asymptotically stable EDE for the problem \eqref{endemic} if and only if $\rv>1$.  This solution is determined to leading order as $\epsilon \to 0$ by
\[ P = W, \quad Z = \frac{\rv -1}{\sigma+(1-\sigma)\rr S}, \quad V = \frac{W}{1+\sigma Z}, \quad Y = SZ, \quad X = \xi VZ, \]
with $S$ the unique solution of
\[ (U-S)[\sigma+(1-\sigma)\rr S] =(\rv-1)S, \qquad \rv = \rr (U+\xi W) \]
in the interval $0 \le S \le \rv^{-1} U$.
The equilibrium rate of new infections is then
\[ F(W;\rr,\xi,\sigma)=1-S-V. \]
\ethm

It is interesting to note that the vaccination rate constant $\theta$ does not appear in either the formula for $\rv$ or the EDE formulas; hence, adjusting this parameter in an attempt to account for vaccine unwillingness would make no difference.

\section{Results}

\subsection{The Endemic Scenario}

\begin{figure}[ht]
\centering    
\includegraphics[width=0.75\textwidth]{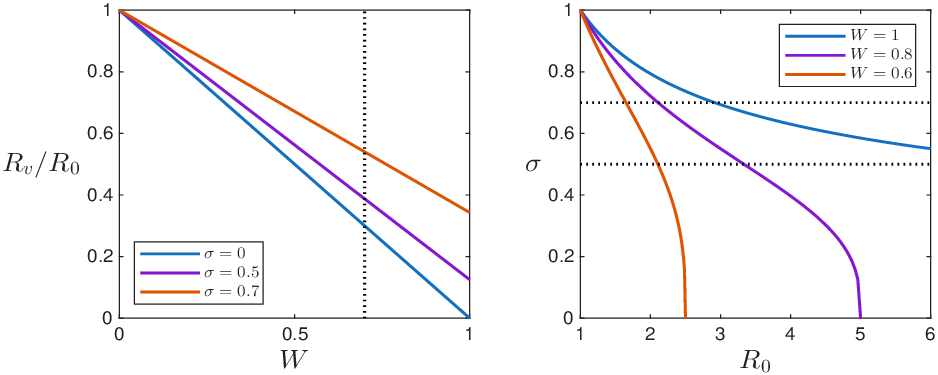}
\caption{Left: Reduction factor from $\rr$ to $\rv$; Right: Regions in disease-vaccine space where the DFE is stable (below and left of the curves)}
\label{SVIR_Rv}
\end{figure}

On the endemic time scale, our tasks were to determine the impact of vaccine unwillingness on the basic reproduction number \eqref{Rv} and the equilibrium new infection rate $F(W)$ given in Theorem 2.  It is reasonable to reduce the number of parameters by assuming that all vaccination modifications are of equal value, so that $\xi = \sigma^3$.  Figure \ref{SVIR_Rv} displays the results.  \red{The left panel of the figure shows the factor by which the basic reproduction number is reduced by vaccination as a function of the fraction of the population that is willing to be vaccinated.  The three curves are for values of $\sigma$ corresponding to a perfect vaccine, a moderate vaccine ($\sigma=0.5$), and a marginal vaccine ($\sigma=0.7$).  At first it seems odd that the moderate curve is closer to the perfect one than the marginal one; the reason for this is that the basic reproduction number is a linear function of the overall epidemiological disease burden of vaccinated individuals, which is $\xi=\sigma^3$, rather than a linear function of the infection reduction factor $\sigma$.  The vertical dotted line marks $W=0.7$, which is a reasonable estimate of vaccine willingness.  Even with the minimal vaccine, the difference in outcome between $W=1$ and $W=0.7$ is large, with an $\rv:\rr$ ratio of 0.34 for the former and 0.54 for the latter.}

The basic reproduction number itself is not the most important property of an endemic disease model; rather, it is the hypersurface in the parameter space that marks out the stability regions for the DFE and EDE.  Several projections of this hypersurface onto the $\rr \sigma$ plane are shown in the right panel of Figure \ref{SVIR_Rv}, with three different values of willingness.  The region below and to the left of each curve indicates the disease/vaccine combinations where the DFE is stable, given the level of willingness indicated in the legend.  Thus, the region between the $W=1$ and $W=0.8$ curves shows disease/vaccine combinations for which the assumption of full willingness gives the wrong stability result if the actual willingness is only 80\%.  The dotted lines mark the approximate range of $\sigma$ values for the seasonal flu vaccine \cite{cdc2025fluvaccine}.  Most other vaccines are far more effective, corresponding to lower $\sigma$.  Even 10\% unwillingness would significantly overestimate the impact of vaccination on disease eradication, particularly for a highly effective vaccine.

\begin{figure}[ht]
\centering    
\includegraphics[width=0.35\textwidth]{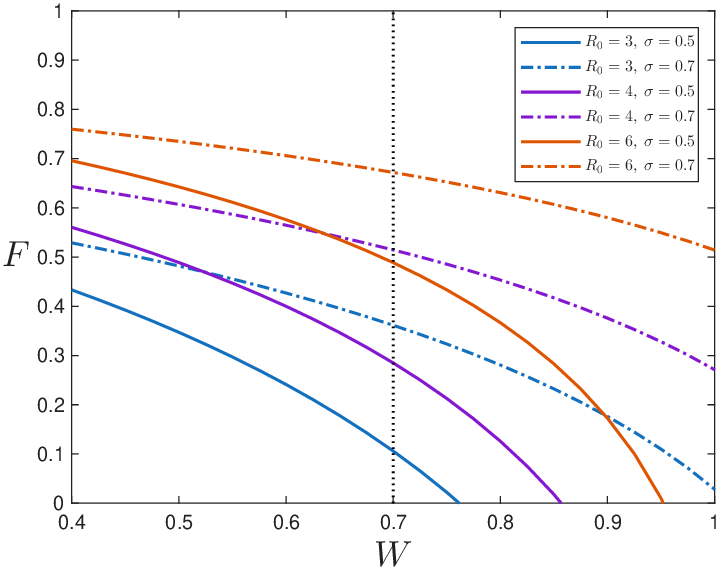}
\caption{The rate of new infections at equilibrium}
\label{infrate}
\end{figure}

When the EDE is stable, the most important model outcome is the rate of new infections at equilibrium.  This outcome is illustrated in Figure \ref{infrate}.  \red{The impact of assuming $W=1$ when the actual willingness is $W=0.7$ can be seen by comparing the values of $F$ on the dotted vertical line with those at the far right of the plot.  This impact is significant for a vaccine that is only marginally effective (the dash-dot curves), and grows quickly as the vaccine effectiveness increases.}  As in the case when the DFE is stable, neglecting vaccine unwillingness yields results that are too far from correct to be useful.

Instead of making the rate diagram more complicated to address vaccine unwillingness, couldn't we instead compensate for unwillingness by replacing the rate constant $\phi$ with the smaller constant $W \phi$ while retaining the simpler disease structure that assumes full willingness?  For endemic scenarios, the answer is an emphatic ``no.''  Changing the rate constant makes no difference to leading order in the endemic model, which is essentially the same as making no difference at all.  While our study considers a particularly simple disease profile, there is no reason to think that this conclusion would be any different for a more complicated profile.  Given that unwillingness is always a significant issue in any community, all endemic models with vaccination need to have unwillingness built into the rate diagram.

\subsection{The Epidemic Scenario}

It was surprising that the rate constant $\theta$ played no role in the endemic results to leading order.  It should play a significant role at leading order in the epidemic model, however.  The long-term outcome of the epidemic model is defined as \eqref{epifnc}, while the short term behavior can be defined as the graphs of $\ihat$ and $V$.  In both cases, our primary goal is to observe the difference made by attempting to account for unwillingness by deprecating the rate constant from $\theta$ to $W\theta$ rather than by adding an unwilling susceptible subclass.  This comparison depends on the disease and vaccine characteristics $\rr$ and $\sigma$.  

Note that it would be a simple matter to compute sensitivity coefficients for various parameters to various outcomes; however, this is overkill given that the model is for a generic disease with parameters estimated from general averages.  Instead, it is more useful to judge impact as ``noticeable'' if one can see a difference in a graph and ``significant'' if the curves show a clear difference over a broad range of parameter sets.

\begin{figure}[ht]
\centering    
\includegraphics[width=0.75\textwidth]{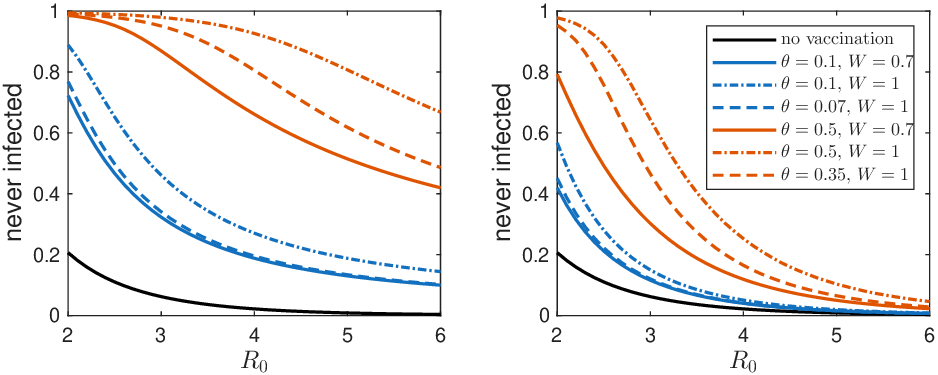}
\caption{Population fractions that escape infection in the epidemic scenario, with $\sigma=0$ (left) and $\sigma=0.7$ (right).  The nominal vaccination rate constants are $\theta=0.1$ (blue) and $\theta=0.5$ (orange).  Solid curves have $W=0.7$, and dash-dot curves have $W=1$.  Dashed curves are for the model that assumes full willingness but deprecates the vaccination rate constant by a factor of $W=0.7$.}
\label{safe}
\end{figure}

Figure \ref{safe} shows how the graph of the population fraction that avoids infection during the initial epidemic \eqref{epifnc} vs $\rr$ depends on a variety of factors.  The left panel is for the perfect vaccine ($\sigma=0$) and the right for a marginal vaccine ($\sigma=0.7$).  The bottom (black) curve is for the case of no vaccine, the middle group (blue) is for slow vaccination ($\theta=0.1$), and the top group (orange) is for fast vaccination  ($\theta=0.5$).  Within each group, the three curves represent three different assumptions about willingness.  The lowest (solid) curve is for the full model \eqref{epidemic} with willingness used to partition the susceptible class into subgroups.  The top (dash-dot) curves are for the model without subclasses \eqref{epidemic2} and assuming full willingness.  The middle (dashed) curves are the more nuanced simplification that uses the model without subclasses but attempts to correct for unwillingness by using $W \theta$ as the infection rate constant for the model.

\red{Comparison of the solid and dash-dot curves shows the impact of vaccine unwillingness.  Significantly more people become infected during the epidemic when 30\% of people are unwilling to be vaccinated, with the effect larger for more effective vaccines and/or faster vaccination rate.  These details are not surprising, as vaccination should have a greater impact when it is accomplished quickly and when it is more effective.}

\red{The overall impact of vaccine effectiveness is seen in the general comparison of the two panels in the figure.  A perfect vaccine has a very large impact even if there is significant unwillingness and the vaccination rate is slow.  However, the impact of a marginal vaccine, delivered at a slow rate, and with 30\% unwillingness (the solid blue curve in the right panel) is still noticeable for diseases of moderate infectiousness ($\rr<4$) and significant for any disease if the vaccination rate is large, even with 30\% unwillingness.  Of the factors that determine the public health consequences of vaccination, it appears from the figure that disease infectiousness and vaccine effectiveness are the most important, followed by vaccination rate and then unwillingness.}

What then of the modeling strategy of accounting for unwillingness by deprecating the rate constant rather than using the more correct compartment diagram (solid curves vs dashed curves)?  This depends on the vaccination rate.  For slow vaccination (blue curves), the difference is noticeable, but not large.  For fast vaccination (orange curves), the difference is always significant, regardless of vaccination effectiveness or disease infectiousness.

\begin{figure}[ht]
\centering    
\includegraphics[width=0.75\textwidth]{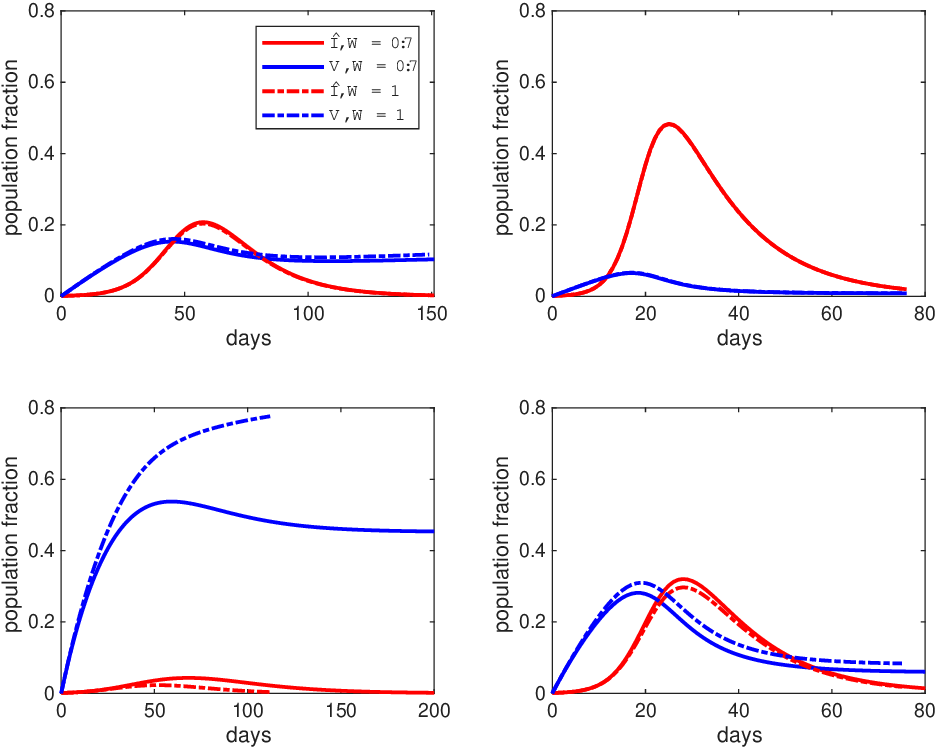}
\caption{Total population infectiousness $\ihat$ (red) and vaccinated fraction $V$ (blue), with solid curves for the model \eqref{epidemic} and dash-dot curves for the simplified model that accounts for unwillingness by decreasing the vaccination rate constant to $W \theta$ rather than modifying the standard SVIR compartment model; $\rr=3$ (left panels), $\rr=6$ (right panels), $\theta=0.1$ (top panels), $\theta=0.5$ (bottom panels); all panels have $\sigma=0.5$ and $W=0.7$.}
\label{SVIR_expts}
\end{figure}

For a look at some specific simulation details, we examine graphs of the total infectiousness $\ihat$ and vaccinated population fraction $V$ for several scenarios, shown in Figure \ref{SVIR_expts}.  The solid curves are for \eqref{epidemic}, while the dash-dot curves are for the simplified model with the $S$, $S_W$, and $V$ equations replaced by \eqref{epidemic2}.  All scenarios use modest values of $\sigma=0.5$ and $W=0.7$.  The left panels have $\rr=3$, while the right panels have $\rr=6$; the top panels have $\theta=0.1$ and the bottom have $\theta=0.5$.

\red{Comparing the left and right panels shows the impact of the basic reproduction number.  As expected, larger $\rr$ results in a larger wave of infection.  Larger $\rr$ also decreases the size of the vaccinated population because a larger number of susceptibles become infected before they can be vaccinated.  Comparing the top and bottom panels shows the impact of the vaccination rate constant.  As expected, faster vaccination reduces the wave of infection in addition to increasing the vaccinated population.}  Of particular interest is the comparison between the two methods of accounting for unwillingness.  When the vaccination rate is slow, there is little difference between accounting for unwillingness by modifying the rate diagram and accounting for it by lowering the vaccination rate constant.  When vaccination is fast, there is error in both $\ihat$ and $V$ that is noticeable but not large, with the exception that the error in the size of the vaccinated fraction is large if $\rr$ is small in addition to the vaccination rate being fast.

\section{Discussion}

This study presents a simple model for a generic disease with an available vaccine and no mortality.  We set out to address the questions of how much error in key epidemiological outcomes is caused by ignoring vaccine unwillingness and the extent to which that error can be corrected by using vaccine unwillingess to deprecate the vaccination rate constant rather than to change the rate diagram for the disease.  The results are quite different, depending on the time scale of interest.

\red{For endemic models, whose purpose is to determine whether the final outcome is a stable disease-free equilibrium state or a stable endemic disease equilibrium state, the error caused by neglecting unwillingness is considerable and cannot be corrected by accounting for unwillingness in the vaccination rate constant.  On the endemic time scale of demographics, vaccination occurs so rapidly that its exact rate does not affect the key outcomes.  Nearly all susceptibles in the long-term scenario are unvaccinated, as willing susceptibles become vaccinated much more quickly than susceptible individuals of either willingness become infected.}

The answers to the questions about the importance of correct modeling of vaccination are more subtle in the epidemic time scale, where the large population of susceptibles means that a significant number of willing susceptibles become infected before they can be vaccinated.  \red{The epidemiological outcomes, and also the extent to which incorrect modeling matters, depend on the comparison between the rates by which susceptibles become infected, represented by $\rr$, and by which willing susceptibles become vaccinated, represented by $\theta$.  If a disease is highly infectious and the vaccination rate relatively slow, then vaccination has only a limited impact on the epidemiological outcomes, and consequently the error of ignoring unwillingness is unimportant.  In the opposite case of less infectiousness and faster vaccination, the error in neglecting unwillingness becomes significant.}

From a modeling perspective, it is clear that correct incorporation of vaccine unwillingness into models intended to determine the effective basic reproduction number, stability of equilibrium outcomes, and ongoing disease burden for endemic disease settings is absolutely vital.  For models intended to determine short-term outcomes, such as the population fraction that escapes infection in the initial wave, there is often no great harm in factoring unwillingness into the rate constant.  However, the analysis of models on the epidemic time scale is necessarily done by numerical integration of the differential equations.  While adding additional classes to an endemic model increases the complexity of the analysis, it makes no real difference to numerical integration whether we add a few additional variables.  Hence, it should be preferable to correctly account for vaccine unwillingness for epidemic models as well as endemic models.

\bibliographystyle{plain}
\bibliography{bibliography}

\end{document}